\newcommand{\bfx}{\mathbf{x}}
\newcommand{\bfu}{\mathbf{u}}
\newcommand{\bfz}{\mathbf{z}}
\newcommand{\bff}{\mathbf{f}}
\newcommand{\bfL}{\mathbf{L}}
\newcommand{\bfQ}{\mathbf{Q}}
\def\cB{\mathcal{B}}
\def\bQ{\mathbf{Q}}
\def\bPsi{\mathbf{\Psi}}
\def\tr{\textnormal{tr}}
\renewcommand{\IND}{\ensuremath{\mathbbm{1}}}
\def\twofig{0.4\textwidth}
\def\figdir{.}
\theoremstyle{plain}
\newtheorem{theo}{Theorem}[]
\newtheorem{prop}[theo]{Proposition}
\newtheorem{theorem}{Theorem}[section]
\newtheorem{lemma}[theorem]{Lemma}
\newcommand{\bftheta}{\boldsymbol{\theta}}
\def\cL{\mathcal{L}}
\def\cQ{\mathcal{Q}}
\def\cI{\mathcal{I}}
\def\bpo{\big(\big(}
\def\bpc{\big)\big)}
\newif\iflong
\title{Inference for determinantal point processes \\without spectral knowledge}
\author{
R\'emi Bardenet$^*$ \\
CNRS \& CRIStAL, Univ. Lille\\
59655 Villeneuve d'Ascq, France\\
\texttt{remi.bardenet@gmail.com} \\
\and
Michalis K. Titsias$^*$\\
Dept. of Informatics, Athens Univ. of Economics and Business\\
Patision 76, 10434 Athens, Greece \\
\texttt{mtitsias@aueb.gr}\\
~\\
\footnotesize{$^*$Both authors contributed equally to this work.}
}
\date{}
\begin{document}
\maketitle 

\begin{abstract}
Determinantal point processes (DPPs) are point process models that
naturally encode diversity between the points of a
given realization, through a positive definite kernel $K$. DPPs possess desirable properties, such as exact
sampling or analyticity of the moments, but learning the parameters of
kernel $K$ through likelihood-based inference is not
straightforward. First, the kernel that appears in the
likelihood is not $K$, but another kernel $L$ related to $K$ through
an often intractable spectral decomposition. This issue is typically bypassed in machine learning by
directly parametrizing the kernel $L$, at the price of some
interpretability of the model parameters. We follow this approach
here. Second, the likelihood has an intractable normalizing
constant, which takes the form of a large determinant in the case of a
DPP over a finite set of objects, and the form of a Fredholm determinant in the
case of a DPP over a continuous domain. Our main contribution is to derive bounds on the likelihood of
a DPP, both for finite and continuous domains. Unlike previous work, our bounds are
cheap to evaluate since they do not rely on approximating the spectrum
of a large matrix or an operator. Through usual arguments, these bounds thus yield cheap variational
inference and moderately expensive exact Markov chain Monte Carlo inference methods for DPPs.
\end{abstract}

\section{Introduction}
\label{s:intro}
Determinantal point processes (DPPs) are point processes \cite{DaVe03} that
encode repulsiveness using algebraic arguments. They first appeared in
\cite{Mac75}, and have since then received much attention, as they
arise in many fields, e.g. random matrix theory, combinatorics,
quantum physics. We refer the reader to \cite{KuTa12, LaMoRuSub, HKPV06} for detailed
tutorial reviews, respectively aimed at audiences of machine learners, statisticians,
and probabilists. More recently, DPPs have been considered as a
modelling tool, see e.g. \cite{LaMoRuSub, KuTa12, ZoAd12}: DPPs
appear to be a natural alternative to Poisson processes
when realizations should exhibit repulsiveness. In \cite{KuTa12}, for
example, DPPs are used to model diversity among summary timelines in a
large news corpus. In \cite{AFAT14}, DPPs model diversity among the
results of a search engine for a given query. In \cite{LaMoRuSub}, DPPs model the spatial
repartition of trees in a forest, as similar trees compete for nutrients in the
ground, and thus tend to grow away from each other. With these
modelling applications comes the question of learning a DPP from data,
either through a parametrized form \cite{LaMoRuSub,AFAT14}, or
non-parametrically \cite{GKFT14, MaSr15}. We focus in this paper on
parametric inference.

Similarly to the correlation between the function values in a Gaussian
process \cite{RaWi06}, the repulsiveness in a DPP is defined through a
kernel $K$, which measures how much two points in a realization repel
each other. The likelihood of a DPP involves the evaluation and the spectral decomposition
of an operator $\cL$ defined through a kernel $L$ that is related to $K$. There are two main issues that
arise when performing likelihood-based inference for a DPP. First, the
likelihood involves evaluating the kernel $L$, while it is more
natural to parametrize $K$ instead, and there is no easy link between
the parameters of these two kernels. The second issue is
that the spectral decomposition of the operator $\cL$ required in the likelihood evaluation is rarely available in
practice, for computational or analytical reasons. For example, in the case of a
large finite set of objects, as in the news corpus application
\cite{KuTa12}, evaluating the likelihood once requires the
eigendecomposition of a large matrix. Similarly, in the case of a
continuous domain, as for the forest application \cite{LaMoRuSub}, the
spectral decomposition of the operator $\cL$ may
not be analytically tractable for nontrivial choices of kernel $L$. In this paper, we focus on the second
issue, i.e., we provide likelihood-based inference methods that assume
the kernel $L$ is parametrized, but that do not
require any eigendecomposition, unlike \cite{AFAT14}. More specifically, our main
contribution is to provide bounds on the likelihood of a DPP that do
not depend on the spectral decomposition of the operator $\cL$. For
the finite case, we draw inspiration from bounds used for variational
inference of Gaussian processes \cite{titsias}, and we extend these
bounds to DPPs over continuous domains.

For ease of presentation, we first consider DPPs over finite sets of
objects in Section~\ref{s:finite}, and we derive bounds on the
likelihood. In Section~\ref{s:learning}, we plug these bounds into
known inference paradigms: variational inference and Markov chain
Monte Carlo inference. In Section~\ref{s:continuous}, we extend our
results to the case of a DPP over a continuous domain. Readers who are
only interested in the finite case, or who are unfamiliar with
operator theory, can safely skip Section~\ref{s:continuous} without
missing our main points. In
Section~\ref{s:experiments}, we experimentally validate our results,
before discussing their breadth in Section~\ref{s:discussion}.

\section{DPPs over finite sets}
\label{s:finite}

\subsection{Definition and likelihood}
\label{s:Finite:def}

Consider a discrete set of items $\mathcal{Y} = \{\bfx_1, \ldots,
\bfx_n\}$, where $\bfx_i \subset \mathbb{R}^d$ is a vector of attributes that describes item
$i$. Let $K$ be a symmetric positive definite kernel \cite{CrSh04} on $\mathbb{R}^d$, and
let $\bK = ((K(\bx_i,\bx_j)))$ be the Gram matrix of $K$. The DPP of
kernel $K$ is defined as the probability distribution over all possible 
$2^n$ subsets $Y \subseteq \mathcal{Y}$ such that
\begin{equation}
\mathbb{P}(A\subset Y) = \det(\bK_A),
\label{e:discreteCDF}
\end{equation}
where $\bK_A$ denotes the sub-matrix of $\bK$ indexed by the elements of
$A$. This distribution exists and is unique if and only if the
eigenvalues of $\bK$ are in $[0,1]$ \cite{HKPV06}. Intuitively, we can think of
$K(\bx,\by)$ as encoding the amount of negative correlation, or
``repulsiveness'' between $\bx$ and $\by$. Indeed, as remarked in
\cite{KuTa12}, \eqref{e:discreteCDF} first yields that diagonal elements of
$\bK$ are marginal probabilities: $\mathbb{P}(\bx_i\in Y) =
K_{ii}$. Equation \eqref{e:discreteCDF} then entails that $\bx_i$ and $\bx_j$ are
likely to co-occur in a realization of $Y$ if and only if
$$
\det K_{\{\bx_i,\bx_j\}} = K(\bx_i,\bx_i)K(\by_i,\by_i) -
K(\bx_i,\bx_j)^2 = \mathbb{P}(\bx_i\in Y) \mathbb{P}(\bx_j\in Y) -  K_{ij}^2
$$
is large: off-diagonal terms in $\bK$ indicate whether points tend to
co-occur.

Providing the eigenvalues of $\bK$ are further restricted to be in
$[0,1)$, the DPP of kernel $K$ has a likelihood \cite{DaVe03}. More
specifically, writing $Y_1$ for a realization of $Y$,
\begin{equation}
\mathbb{P}(Y=Y_1) = \frac{\det \bL_{Y_1}}
{\text{det}(\bL + \bI)},
\label{e:discreteDensity}
\end{equation}
where $\bL=(\bI-\bK)^{-1}\bK$, $\bI$ is the $n\times n$ identity
matrix, and $\bL_{Y_1}$ denotes the sub-matrix of $\bL$ indexed by the
elements of $Y_1$. Now, given a realization $Y_1$, we would like to
infer the parameters of kernel $K$, say the parameters $\theta_K =
(a_K,\sigma_K)\in (0,\infty)^2$ of a squared exponential kernel \cite{RaWi06}
$$
K(\bx,\by) = a_K \exp\left( - \frac{\Vert \bx -\by\Vert^2}{2\sigma_K^2}\right).
$$
Since the trace of $\bK$ is the expected number of points in $Y$
\cite{HKPV06}, one can estimate $a_K$ by the number of points in the data
divided by $n$ \cite{LaMoRuSub}. But $\sigma_K$, the parameter governing
the repulsiveness, has to be fitted. If the number of items
$n$ is large, likelihood-based methods such as maximum likelihood are too
costly: each evaluation of \eqref{e:discreteDensity} requires $\cO(n^2)$ storage and $\cO(n^3)$
time. Furthermore, valid choices of $\theta_K$ are constrained, since
one needs to make sure the eigenvalues of $\bK$ remain in $[0,1)$.

A partial work-around is to note that given any symmetric positive definite kernel $L$, the likelihood
\eqref{e:discreteDensity} with matrix $\bL = ((L(\bx_i,\bx_j)))$ corresponds to a valid choice of $K$, since
the corresponding matrix $\bK = \bL(\bI+\bL)^{-1}$ necessarily has
eigenvalues in $[0,1]$, which makes sure the DPP exists
\cite{HKPV06}. The work-around consists in directly parametrizing and inferring the kernel $L$ instead
of $K$, so that the numerator of \eqref{e:discreteDensity} is cheap to
evaluate, and parameters are less constrained. Note that this step
favours tractability over interpretability of the
inferred parameters: if we assume 
$$
L(\bx,\by) = a_L \exp\left( - \frac{\Vert \bx -\by\Vert^2}{2\sigma_L^2}\right),
$$
the number of points and the repulsiveness of the points
in $Y$ do not decouple as nicely as when $K$ is squared
exponential. For example, the expected number of
items in $Y$ depends on $a_L$ and $\sigma_L$ now, and both parameters
also affect repulsiveness. There is some work investigating
approximations to $\bK$ to retain the more interpretable
parametrization \cite{LaMoRuSub}, but the machine learning literature \cite{KuTa12, AFAT14} almost exclusively
adopts the more tractable parametrization of $L$. In this paper, we also make
this choice of parametrizing $L$ directly.

Now, the only expensive step in the evaluation of
\eqref{e:discreteDensity} is the computation of
$\det(\bL+\bI)$. While this still prevents the application of maximum
likelihood, bounds on this determinant can be used in a variational
approach or an MCMC algorithm, for example. In \cite{AFAT14}, bounds
on $\det(\bL+\bI)$ are proposed, requiring only the first $m$ eigenvalues of $\bL$, where
$m$ is chosen adaptively at each MCMC iteration to make the acceptance decision
possible. This still requires the application of power iteration methods,
which are limited to the finite domain case, require storing the
whole $n\times n$ matrix $\bL$, and are prohibitively slow when the number of required
eigenvalues $m$ is large.

\subsection{Nonspectral bounds on the likelihood}

Let us denote by $\bL_{\cA\cB}$ the submatrix of $\bL$ where row indices
correspond to the elements of $\cA$, and column indices to those of
$\cB$. When $\cA=\cB$, we simply write $\bL_\cA$ for
$\bL_{\cA\cA}$, and we drop the subscript when $\cA=\cY$. Drawing inspiration from sparse approximations to
Gaussian processes using inducing variables
\cite{titsias}, we let $\cZ=\{\bz_1,\dots,\bz_m\}$ be an
arbitrary set of points in $\mathbb{R}^d$, and we approximate $\bL$ by
$ \bfQ = \bL_{\cY\cZ}[\bL_\cZ]^{-1}\bL_{\cZ\cY}$. Note that we do not constrain $\cZ$ to belong to $\cY$, so that our bounds do not rely on a
Nystr\"om-type approximation \cite{AKFT13}. We term $\cZ$ ``pseudo-inputs'', or ``inducing inputs''.

\begin{prop}
\begin{equation}
\frac{1}
{\det(\bQ + \bI)} 
e^{-  
\tr\left(\bL  -\bQ \right)}
 \leq \frac{1}
{\det(\bL + \bI)}  \leq 
\frac{1}
{\det(\bQ + \bI)}. 
\label{eq:boundDetFinite}
\end{equation}
\label{p:boundsFinite}
\end{prop}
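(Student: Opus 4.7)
The plan is to pass to reciprocals and prove the equivalent double inequality
\begin{equation*}
\det(\bQ+\bI)\ \leq\ \det(\bL+\bI)\ \leq\ \det(\bQ+\bI)\,e^{\tr(\bL-\bQ)}.
\end{equation*}
The whole argument hinges on one structural observation: the residual $\bR:=\bL-\bQ$ is positive semidefinite. Indeed, by construction $\bQ=\bL_{\cY\cZ}\bL_\cZ^{-1}\bL_{\cZ\cY}$ is exactly the expression of the Schur complement that appears when writing the block Gram matrix of the kernel $L$ over $\cY\cup\cZ$. Since $L$ is symmetric positive definite, that block matrix is PSD and the Schur complement of $\bL_\cZ$ inside it, which equals $\bL-\bQ$, is PSD. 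This is the ``Nystr\"om-type'' inequality in matrix form, and it is really the only place where positivity of the kernel enters the proof.

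From $\bR\succeq 0$ the lower bound is immediate: $\bL+\bI\succeq \bQ+\bI\succ 0$, and the determinant is monotone on positive semidefinite ordering, giving $\det(\bL+\bI)\geq \det(\bQ+\bI)$.

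For the upper bound, write $\bL+\bI=(\bQ+\bI)+\bR$ and factor:
\begin{equation*}
\det(\bL+\bI)\ =\ \det(\bQ+\bI)\,\det\!\bigl(\bI+(\bQ+\bI)^{-1}\bR\bigr).
\end{equation*}
The matrix $(\bQ+\bI)^{-1}\bR$ is similar to the PSD matrix $(\bQ+\bI)^{-1/2}\bR(\bQ+\bI)^{-1/2}$, hence its eigenvalues $\lambda_i$ are nonnegative. Using $1+\lambda\leq e^\lambda$ termwise,
\begin{equation*}
\det\!\bigl(\bI+(\bQ+\bI)^{-1}\bR\bigr)\ =\ \prod_i(1+\lambda_i)\ \leq\ \exp\!\Bigl(\sum_i\lambda_i\Bigr)\ =\ e^{\tr((\bQ+\bI)^{-1}\bR)}.
\end{equation*}
Finally, $\bQ\succeq 0$ implies $(\bQ+\bI)^{-1}\preceq \bI$, so $\bI-(\bQ+\bI)^{-1}\succeq 0$; since the trace of the product of two PSD matrices is nonnegative, $\tr\bigl((\bI-(\bQ+\bI)^{-1})\bR\bigr)\geq 0$, i.e.\ $\tr((\bQ+\bI)^{-1}\bR)\leq \tr(\bR)=\tr(\bL-\bQ)$. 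Combining gives the upper bound.

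There is no real obstacle beyond recognising the Schur-complement identity $\bL-\bQ\succeq 0$; everything else is standard PSD manipulation (monotonicity of $\det$, $1+x\leq e^x$ on the spectrum, and trace of a product of PSD matrices). A minor care point is that one should only drop from $(\bQ+\bI)^{-1}$ to $\bI$ inside the trace (not inside the determinant, where the inequality would go the wrong way), which is why the $e^{\tr(\bL-\bQ)}$ factor — rather than a sharper but less usable exponent — is the natural one to state.
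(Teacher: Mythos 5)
Your proof is correct, but it reaches the harder (left) inequality by a genuinely different route than the paper. Both arguments start from the same structural fact that the Schur complement $\bL-\bQ$ is positive semidefinite, and for the easy direction your appeal to monotonicity of the determinant under the semidefinite order is the same substance as the paper's volume-of-ellipsoids/Cholesky argument. For the exponential bound, however, the paper invokes the nontrivial determinant inequality of Seiler and Simon, $\det(A+B+\bI)\leq\det(\vert A\vert+\bI)\det(\vert B\vert+\bI)$, applied with $A=\bL-\bQ$ and $B=\bQ$, and then uses $1+x\leq e^x$ on the eigenvalues of $\bL-\bQ$; you instead factor $\det(\bL+\bI)=\det(\bQ+\bI)\det\bigl(\bI+(\bQ+\bI)^{-1}(\bL-\bQ)\bigr)$, control the second factor by $1+x\leq e^x$ applied to the (nonnegative, by similarity to a PSD matrix) eigenvalues, and finish with the trace comparison $\tr\bigl((\bQ+\bI)^{-1}(\bL-\bQ)\bigr)\leq\tr(\bL-\bQ)$, which follows since $\bI-(\bQ+\bI)^{-1}\succeq 0$ and the trace of a product of PSD matrices is nonnegative. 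Your route is entirely elementary (no external determinant inequality needed) and in passing yields the slightly sharper intermediate bound $\det(\bL+\bI)\leq\det(\bQ+\bI)\,e^{\tr((\bQ+\bI)^{-1}(\bL-\bQ))}$; what the paper's choice buys is brevity given the cited theorem and, more importantly, a tool stated for trace-class operators, which lets the same argument transfer verbatim to the continuous case of Proposition~\ref{p:boundsContinuous}, whereas your factorization would need additional care (Fredholm determinants of operator products) to carry over to that setting.
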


\iflong
\begin{proof}
The right inequality is a straightforward consequence of the Schur
complement $\bL-\bQ$ being positive
semidefinite. For instance, once could remark that for $\bv\in\mathbb{R}^n$,
$$
\bv^T\bL\bv \leq 1 \Rightarrow \bv^T\bQ\bv\leq 1,
$$
so that
$$
\int_{\mathbb{R}^n} \IND_{\{\bv^T(\bL+\bI)\bv\leq 1\}}d\bv \leq \int_{\mathbb{R}^n} \IND_{\{\bv^T(\bQ+\bI)\bv\leq 1\}}d\bv.
$$
A change of variables using the Cholesky decompositions of $\bL+\bI$
and $\bQ+\bI$ yields the desired inequality. 

The left inequality in \eqref{eq:boundDetFinite} can be proved along the lines of \cite{titsias}, using
variational arguments (as will be discussed in detail in Section \ref{sec:svi}). Next we give an alternative, more direct
proof based on an inequality on determinants \cite[Theorem
1]{SeSi75}. For any real symmetric matrix $A = P \diag(\lambda_i) P^T$, define its
absolute value as $\vert A\vert = P
\text{diag}(\vert\lambda_i\vert)P^T$. In particular, for a
positive semidefinite $A$, $\vert A\vert = A$. Applying \cite[Theorem 1]{SeSi75} and noting that $\bL$, $\bQ$ and $\bL-\bQ$ are
positive semidefinite, it comes
\begin{eqnarray}
\det(\bL+\bI) &=& \text{det}(\bfL  - \bfQ  + \bfQ  + \bI) \nonumber\\
&\leq& \text{det}(\vert\bfL - \bfQ\vert +  \bI)
\text{det}(\vert\bfQ\vert + \bI)\nonumber\\
&=& \text{det}(\bfL - \bfQ +  \bI)
\text{det}(\bfQ + \bI)
\label{e:tool1}
\end{eqnarray}
Now, denote by $\tilde\lambda_i$ the eigenvalues of $\bL-\bQ$, which
are all nonnegative. It comes
\begin{equation}
\text{det}(\bfL - \bfQ +  \bI) = \prod_{i=1}^n(1+\tilde\lambda_i) \leq
\prod_{i=1}^n e^{\tilde\lambda_i} =  e^{\text{tr}\left(\bfL - \bfQ\right)},
\label{e:tool2}
\end{equation}
where we used the inequality $1+x\leq e^x$. Plugging \eqref{e:tool2}
into \eqref{e:tool1} yields the left part of \eqref{eq:boundDetFinite}. 
\end{proof}
\else
The proof relies on a nontrivial inequality on determinants \cite[Theorem 1]{SeSi75}, and is provided in the supplementary material.
\fi

% -------------------
\section{Learning a DPP using bounds}
\label{s:learning}

In this section, we explain how to run variational inference and
Markov chain Monte Carlo methods using the bounds in Proposition~\ref{p:boundsFinite}. In this section, we also make
connections with variational sparse Gaussian processes more explicit.

\subsection{Variational inference}
\label{s:variational}
The lower bound in Proposition~\ref{p:boundsFinite} can be used for
variational inference. Assume we have $T$ point process realizations
$Y_1,\dots,Y_T$, and we fit a DPP with kernel $L=L_\theta$. The log likelihood can be expressed using \eqref{e:discreteDensity}
\begin{equation}
\ell(\theta) = \sum_{i=1}^T \log \text{det}(\bL_{Y_t}) - T\log\det(\bL+\bI).
\label{e:likelihood}
\end{equation}
Let $\cZ$ be an arbitrary set of $m$ points in $\mathbb{R}^d$. Proposition~\ref{p:boundsFinite} then yields a lower bound
\begin{equation}
\mathcal{F}(\theta, \cZ) \defeq \sum_{t=1}^T \log \text{det}(\bL_{Y_t}) - T \log \text{det}(\bQ + \bI) 
+ T \text{tr}\left(\bL - \bQ \right) \leq \ell(\theta).
\label{eq:boundLogLikelihood}
\end{equation}
The lower bound $\mathcal{F}(\bftheta, \cZ)$ can be computed
efficiently in $O(n m^2)$ time. Instead of maximizing
$\ell(\theta)$, we can maximize $\mathcal{F}(\theta, \cZ)$ jointly with respect to the kernel parameters $\theta$ and the
variational parameters $\cZ$. 

To maximize \eqref{eq:boundLogLikelihood}, one can e.g. implement an EM-like scheme,
alternately optimizing in $\cZ$ and $\btheta$. Kernels are often differentiable with respect to $\theta$, and sometimes $\mathcal{F}$
will also be differentiable with respect to the pseudo-inputs $\cZ$, so that
gradient-based methods can help. In the general case, black-box
optimizers such as CMA-ES \cite{Han06}, can also be employed.

\iflong
\subsubsection{Connections with variational inference in sparse
  GPs \label{sec:svi}}
\label{s:svi}
We now discuss the connection between the variational lower bound in
Proposition~\ref{p:boundsFinite} and the variational lower bound used
in sparse Gaussian process (GP; \cite{RaWi06}) models
\cite{titsias}. Although we do not explore this in the current paper,
this connection could extend the repertoire of variational inference
algorithms for DPPs by including, for instance, 
stochastic optimization variants.  

Assume function $f$ follows a GP distribution with zero mean function and kernel
function $L$, so that the vector $\bff$ of function
values evaluated at $\mathcal{Y}$ follows the Gaussian distribution     
$\mathcal{N}({\bf f}| {\bf 0}, \bL)$. Then, the standard Gaussian integral yields\footnote{Notice that 
$\int \mathcal{N}({\bf f}| {\bf 0}, \bL) e^{ - \frac{1}{2} {\bf f}^T {\bf f} }   d \bff 
= (2 \pi)^{n/2} \int \mathcal{N}({\bf f}| {\bf 0}, \bL)
\mathcal{N}({\bf 0} | \bff, {\bf I}) d \bff = (2 \pi)^{n/2}
\mathcal{N}({\bf 0}|{\bf 0}, \bL + {\bf I})$, and \eqref{e:marginal} follows.} 
\begin{equation}
 \frac{1}
 {\text{det}(\bL + \bI)}
 = \left( \int \mathcal{N}({\bf f}| {\bf 0}, \bL) e^{ - \frac{1}{2} {\bf f}^T {\bf f} }   d \bff \right)^2. 
\label{e:marginal}
\end{equation}
Following this interpretation, we augment the vector $\bff$ with a vector of $m$ extra auxiliary function values $\bfu$, referred to as    
inducing variables, evaluated at the inducing points $\cZ$ so that jointly $(\bff, \bf u)$ 
follows
\begin{align}
p(\bff,\bfu) & = \cN\left(0, \begin{pmatrix}  \bL & \bL_{\cY\cZ} \\
    \bL_{\cZ\cY} & \bL_\cZ\end{pmatrix}\right), \nonumber \\
    & =  \mathcal{N}(\bff| \bL_{\cY\cZ}[\bL_\cZ]^{-1}\bfu,
\bL - \bfQ) \mathcal{N}(\bfu|{\bf 0}, \bL_\cZ).
\end{align}
Now by using the fact that $\mathcal{N}({\bf f}| {\bf 0}, \bL) = \int p(\bff,\bfu) d \bfu$, the integral in \eqref{e:marginal} can be expanded so that  
\begin{equation}
 \frac{1}
 {\text{det}(\bL +\bI)}
 = \left( \int  \mathcal{N}(\bff| \bL_{\cY\cZ}[\bL_\cZ]^{-1}\bfu, \bL - \bfQ) \mathcal{N}(\bfu|{\bf 0}, \bL_\cZ)  e^{ - \frac{1}{2} \bff^T \bff }  d \bff d \bfu \right)^2. 
 \label{eq:gaussReprDiscrete}
\end{equation}
We can bound the above integral using
Jensen's inequality and the variational distribution $q(\bff,\bfu) = \mathcal{N}(\bff|
\bL_{\cY\cZ}[\bL_\cZ]^{-1}\bfu, \bL - \bfQ) q(\bfu)$, where $q(\bfu)$ is a marginal variational distribution over the 
inducing variables $\bfu$. This form of variational distribution is
exactly the one used for sparse GPs \cite{titsias}, and by treating the factor $q(\bfu)$ optimally we can recover the left
lower bound in Proposition~\ref{p:boundsFinite}, following the lines
of \cite{titsias}. We provide details in Appendix A.
 
The above connection suggests that much of the technology developed for speeding up  GPs can be transferred to DPPs. 
For instance, if we explicitly represent the $q(\bfu)$ variational distribution in the above formulation, then we can develop 
stochastic variational inference variants for learning DPPs based on data
subsampling \cite{Hoffman:2013}. In other words, we can apply to DPPs stochastic variational inference
algorithms for sparse GPs such as \cite{HensmanFL13}. 
\fi

\subsection{Markov chain Monte Carlo inference}
\label{s:MH}
If approximate inference is not suitable, we can use the bounds in
Proposition~\ref{p:boundsFinite} to build a more expensive Markov chain Monte Carlo \cite{RoCa04}
sampler. Given a prior distribution $p(\theta)$ on the parameters $\theta$ of
$L$, Bayesian inference relies on the posterior distribution $\pi(\theta) \propto \exp(\ell(\theta)) p(\theta)$,
where the log likelihood $\ell(\theta)$ is defined in
\eqref{e:likelihood}. A standard approach to sample approximately from $\pi(\theta)$ is
the Metropolis-Hastings algorithm (MH; \cite[Chapter 7.3]{RoCa04}).
MH consists in building an ergodic Markov chain of invariant distribution
$\pi(\theta)$. Given a proposal $q(\theta'\vert\theta)$, the MH
algorithm starts its chain at a user-defined $\theta_{0}$, then at
iteration $k+1$ it proposes a candidate state $\theta'\sim q(\cdot\vert\theta_{k})$
and sets $\theta_{k+1}$ to $\theta'$ with probability 
\begin{eqnarray}
\alpha(\theta_{k},\theta') & = & \min\left[1, \frac{e^{\ell(\theta')} p(\theta')}{e^{\ell(\theta_k)} p(\theta_k)}\frac{q(\theta_{k}\vert\theta')}{q(\theta'\vert\theta_{k})}\right]
\label{e:MHAcceptanceRatio}
\end{eqnarray}
while $\theta_{k+1}$ is otherwise set to $\theta_{k}$. The core of the
algorithm is thus to draw a Bernoulli variable with parameter
$\alpha=\alpha(\theta,\theta')$ for $\theta,\theta'\in\mathbb{R}^d$. This is typically implemented by drawing a
uniform $u\sim\cU_{[0,1]}$ and checking whether $u<\alpha$. In our DPP application, we cannot evaluate
$\alpha$. But we can use Proposition~\ref{p:boundsFinite} to
build a lower and an upper bound $ \ell(\theta)\in [b_-(\theta,\cZ),b_+(\theta,\cZ)]$,
which can be arbitrarily refined by increasing the cardinality of $\cZ$
and optimizing over $\cZ$. We can thus build a lower and upper bound
for $\alpha$
\begin{equation}
\label{e:boundsOnAlpha}
b_-(\theta',\cZ') - b_+(\theta,\cZ) +
\log\left[\frac{p(\theta')}{p(\theta)}\right] \leq
\log\alpha \leq b_+(\theta',\cZ') - b_-(\theta,\cZ) +
\log\left[\frac{p(\theta')}{p(\theta)}\right].
\end{equation}
Now, another way to draw a Bernoulli variable with parameter $\alpha$
is to first draw $u\sim\cU_{[0,1]}$, and then refine the bounds in
\eqref{e:boundsOnAlpha}, by augmenting the numbers $\vert\cZ\vert$, $\vert\cZ'\vert$ of inducing
variables and optimizing over $\cZ,\cZ'$, until $\log u$ is out of the interval formed by the
bounds in \eqref{e:boundsOnAlpha}. Then one can decide whether
$u<\alpha$. This Bernoulli trick is sometimes named {\it retrospective sampling}
and has been suggested as early as \cite{Dev86}. It has been used
within MH for inference on DPPs with spectral bounds
in \cite{AFAT14}, we simply adapt it to our non-spectral bounds.

% -------------------
\section{The case of continuous DPPs}
\label{s:continuous}

DPPs can be defined over very general spaces \cite{HKPV06}. We
limit ourselves here to point processes on $\cX\subset \mathbb{R}^d$
such that one can extend the notion of likelihood. In particular, we define here a DPP on $\cX$ as in \cite[Example
5.4(c)]{DaVe03}, by defining its Janossy density. For definitions of
traces and determinants of operators, we follow \cite[Section VII]{GoGoKa90}.

\subsection{Definition}

Let $\mu$ be a measure on $(\mathbb{R}^d,\cB(\mathbb{R}^d))$ that is
continuous with respect to the Lebesgue measure, with density
$\mu'$. Let $L$ be a symmetric positive
definite kernel. $L$ defines a self-adjoint operator on $L^2(\mu)$ through
$
\cL(f) \defeq \int L(\bx,\by)f(\by) d\mu(\by).
$
Assume $\cL$ is trace-class, and
\begin{equation}
\text{tr}(\cL) = \int_{\mathcal{X}} L(\bfx,\bfx) d\mu(\bfx).  
\label{eq:traceL}
\end{equation}
We assume \eqref{eq:traceL} to avoid
  technicalities. Proving \eqref{eq:traceL} can be done by requiring
  various assumptions on $L$ and $\mu$. Under the
  assumptions of Mercer's theorem, for instance, \eqref{eq:traceL}
  will be satisfied \cite[Section VII, Theorem 2.3]{GoGoKa90}. More
  generally, the assumptions of \cite[Theorem 2.12]{Sim05} apply to
  kernels over noncompact domains, in particular the Gaussian kernel
  with Gaussian base measure that is often used in practice. We denote by $\lambda_i$ the eigenvalues of the compact operator
$\cL$. There exists \cite[Example 5.4(c)]{DaVe03} a simple\footnote{i.e.,
for which all points in a realization are distinct.} point process on
$\mathbb{R}^d$ such that
\begin{equation}
\mathbb{P}\begin{pmatrix}\text{There are $n$ particles, one in
      each of}\\
\text{the infinitesimal balls $B(\bx_i, d\bx_i)$}\end{pmatrix} =
\frac{\det ((L(\bx_i,\bx_j))}{\det(\cI+\cL)} \mu'(x_1)\dots \mu'(x_n),
\label{eq:continuousDensity}
\end{equation}
where $B(\bx, r)$ is the open ball of center $\bx$ and radius $r$, and
where $\det(\cI+\cL) \defeq \prod_{i=1}^{\infty} (\lambda_i + 1)$
is the Fredholm determinant of operator $\cL$ \cite[Section VII]{GoGoKa90}. Such a process is called the determinantal point process
associated to kernel $L$ and base measure $\mu$.\footnote{There is a
  notion of kernel $K$ for general DPPs \cite{HKPV06}, but we define $L$ directly here, for the sake of simplicity. The
  interpretability issues of using $L$ instead of $K$ are the same as
  for the finite case, see Sections~\ref{s:finite} and \ref{s:experiments}.} Equation~\eqref{eq:continuousDensity} is the continuous
equivalent of \eqref{e:discreteDensity}. Our bounds
require $\bPsi$ to be computable. This is the case for the popular
Gaussian kernel with Gaussian base measure.

\subsection{Nonspectral bounds on the likelihood}
In this section, we derive bounds on the likelihood
\eqref{eq:continuousDensity} that do not require to compute the
Fredholm determinant $\det(\cI+\cL)$. 

\begin{prop}
Let $\cZ = \{\bz_1,\dots,\bz_m\}\subset\mathbb{R}^d$, then
\begin{equation}
\frac{\det \bL_\cZ}
{\det(\bL_\cZ + \bPsi)}  e^{- \int L(\bfx,\bfx) d \mu(\bfx) + \text{tr}(\bL_\cZ^{-1} \bPsi) } 
 \leq \frac{1}
{\det(\cI+\cL)}  \leq \frac{\det \bL_\cZ }
{\det(\bL_\cZ + \bPsi)}, 
\label{eq:boundDetContinuous}
\end{equation}
where $\bL_\cZ = ((L(\bz_i,\bz_j))$ and 
$
\bPsi_{ij} = \int L(\bfz_i, \bfx) L(\bfx,\bfz_j) d\mu(\bfx).
$
\label{p:boundsContinuous}
\end{prop}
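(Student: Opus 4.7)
The plan is to mirror the finite-case proof of Proposition~\ref{p:boundsFinite} in the operator setting. I would introduce a finite-rank Nyström-type operator $\cQ$ on $L^2(\mu)$ defined by the kernel
$$
Q(\bx,\by) \defeq L_\cZ(\bx)^T \bL_\cZ^{-1} L_\cZ(\by),
\qquad L_\cZ(\bx) \defeq (L(\bz_1,\bx),\dots,L(\bz_m,\bx))^T.
$$
Then I would show that $\cL - \cQ$ is a positive trace-class operator. Positivity follows exactly as in the finite case: $L(\bx,\by)-Q(\bx,\by)$ is the Schur complement of the block kernel associated with $\cY\cup\cZ$, so it is a positive semidefinite kernel, and hence the associated integral operator is nonnegative. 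Trace-class of $\cQ$ (and thus of $\cL-\cQ$) is automatic since $\cQ$ has rank at most $m$.

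Next I would apply the Seiler--Simon inequality \cite[Theorem~1]{SeSi75}, which is stated for trace-class operators and thus applies verbatim here. Writing $\cI+\cL = (\cL-\cQ) + (\cI+\cQ)$ and using $|\cL-\cQ|=\cL-\cQ$, $|\cQ|=\cQ$, I obtain
$$
\det(\cI+\cL) \;\leq\; \det(\cI+\cL-\cQ)\,\det(\cI+\cQ).
$$
Taking reciprocals gives the lower half of \eqref{eq:boundDetContinuous} up to identifying the two factors; the upper bound of \eqref{eq:boundDetContinuous} will follow from the trivial inequality $\det(\cI+\cL) \geq \det(\cI+\cQ)$, itself a consequence of $\cL \succeq \cQ$ (eigenvalue interlacing for Fredholm determinants, or by applying the same Seiler--Simon step with the roles swapped).

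Two explicit computations then finish the job. First, since $\cQ$ has rank $m$, its nonzero eigenvalues coincide with those of the $m\times m$ matrix $\bL_\cZ^{-1}\bPsi$, via the standard $\Phi A\Phi^*$ versus $A\Phi^*\Phi$ trick with $\Phi^*\Phi=\bPsi$. Hence
$$
\det(\cI+\cQ) = \det(\bI_m + \bL_\cZ^{-1}\bPsi) = \frac{\det(\bL_\cZ+\bPsi)}{\det \bL_\cZ},
$$
which matches the prefactor appearing in \eqref{eq:boundDetContinuous}. Second, using $1+x\leq e^x$ on the nonnegative eigenvalues $\tilde\lambda_i$ of $\cL-\cQ$ as in \eqref{e:tool2} gives $\det(\cI+\cL-\cQ)\leq e^{\tr(\cL-\cQ)}$, and under assumption \eqref{eq:traceL} together with the direct computation $\tr(\cQ) = \int Q(\bx,\bx)d\mu(\bx) = \tr(\bL_\cZ^{-1}\bPsi)$, one gets $\tr(\cL-\cQ) = \int L(\bx,\bx)d\mu(\bx) - \tr(\bL_\cZ^{-1}\bPsi)$. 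Combining these two computations with the Seiler--Simon step yields the left inequality in \eqref{eq:boundDetContinuous}.

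The main obstacle I anticipate is technical rather than conceptual: one must carefully check that all three operators $\cL$, $\cQ$, $\cL-\cQ$ are trace-class so that the Fredholm determinant and Seiler--Simon inequality apply without caveat. For $\cQ$ this is immediate from finite rank; for $\cL$ it is part of the standing assumption preceding \eqref{eq:traceL}; but to write $\tr(\cL-\cQ)=\tr(\cL)-\tr(\cQ)$ one needs to justify that $\tr(\cL)$ admits the integral representation \eqref{eq:traceL}, which is precisely why that hypothesis was isolated above. Everything else is a direct operator-theoretic translation of the finite-dimensional proof.
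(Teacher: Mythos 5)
Your proposal is correct, and its core is the same as the paper's proof of the harder (left) inequality: introduce the Nystr\"om operator $\cQ$, check it and $\cL-\cQ$ are positive and trace-class, apply the Seiler--Simon inequality $\det(\cI+\cL)\leq\det(\cI+\cL-\cQ)\det(\cI+\cQ)$, bound $\det(\cI+\cL-\cQ)\leq e^{\tr(\cL-\cQ)}$ via $1+x\leq e^x$, and use $\tr(\cQ)=\tr(\bL_\cZ^{-1}\bPsi)$ together with assumption \eqref{eq:traceL}. You diverge in two sub-steps. For the right inequality, the paper expands $\det(\cI+\cL)$ as a Fredholm series and bounds it termwise, using that for any finite $\cX=\{\bx_1,\dots,\bx_k\}$ the Schur complement gives $\det\bL_\cX\geq\det\big(\bL_{\cX\cZ}\bL_\cZ^{-1}\bL_{\cZ\cX}\big)$, whence $\det(\cI+\cL)\geq\det(\cI+\cQ)$; you instead invoke monotonicity of the Fredholm determinant under $0\preceq\cQ\preceq\cL$. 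That is valid, but the right justification is Weyl/min--max monotonicity of the eigenvalues of compact positive operators (``interlacing'' is not quite the right name), and your parenthetical fallback of ``Seiler--Simon with the roles swapped'' does not work, since that inequality only produces upper bounds on determinants of sums---so lean on the eigenvalue argument. For the identification $\det(\cI+\cQ)=\det(\bL_\cZ+\bPsi)/\det\bL_\cZ$, the paper proves a separate lemma by computing the finite-rank Fredholm determinant in an orthonormal basis of the span of the $L(\bz_i,\cdot)$ and then applying the classical matrix determinant lemma with $\bW\bW^T=\bPsi$; your transfer of the nonzero spectrum from $\Phi\bL_\cZ^{-1}\Phi^*$ to $\bL_\cZ^{-1}\Phi^*\Phi=\bL_\cZ^{-1}\bPsi$ is an equivalent and somewhat slicker route to the same identity. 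Net effect: your variants shorten the argument by using a bit more operator-theoretic machinery, while the paper's Fredholm-series route for the upper bound stays closer to the finite-dimensional proof and needs nothing about $\cL$ beyond the series representation.
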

\iflong
\else
As for Proposition~\ref{p:boundsFinite}, the proof relies on a
nontrivial inequality on determinants \cite[Theorem 1]{SeSi75} and is
provided in the supplementary material. We also detail in the
supplementary material why \eqref{eq:boundDetContinuous} is the continuous
equivalent to \eqref{eq:boundDetFinite}.
\fi

\iflong
To see how similar \eqref{eq:boundDetContinuous} is to
\eqref{eq:boundDetFinite}, we define $\cQ_\cZ$ to be the operator
on $L^2(\mu)$ associated to the kernel
\begin{equation}
Q_\cZ(\bfx,\bfx') = L(\bfx,\cZ) \bL_{\cZ}^{-1} L(\cZ,\bfx'),
\label{e:defNystromKernel}
\end{equation}
where 
$$ L(\bfx,\cZ) = L(\cZ, \bfx)^T \defeq \begin{pmatrix} L(\bfx, \bz_1)
  &\dots& L(\bfx, \bz_m)\end{pmatrix}.$$
Then the following extension of the matrix-determinant lemma shows that the common
factor in the left and right hand side of \eqref{eq:boundDetContinuous} is the inverse of $\det(\cI+\cQ_\cZ)$,
as in \eqref{eq:boundDetFinite}.

\begin{lemma}
\label{l:matrixDeterminantLemmaForOperators} With the notation of Section~\ref{s:continuous}, it holds
$$ \det (\cI + \cQ_\cZ) = \frac{\det ( \bL_\cZ + \bPsi)}{\det \bL_\cZ}.$$
\end{lemma}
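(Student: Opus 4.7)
The plan is to exploit the fact that $\cQ_\cZ$ is a finite-rank operator and reduce the Fredholm determinant to an $m \times m$ matrix determinant via the operator-theoretic Sylvester identity $\det(\cI + AB) = \det(\cI + BA)$ (valid whenever one of $AB, BA$ is trace class; see e.g.~\cite[Section IV]{GoGoKa90}).

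First I would factor the kernel. Setting $\phi_i(\bfx) \defeq L(\bfx, \bz_i)$, the kernel \eqref{e:defNystromKernel} reads
\begin{equation*}
Q_\cZ(\bfx, \bfx') = \sum_{i,j=1}^m [\bL_\cZ^{-1}]_{ij}\, \phi_i(\bfx)\, \phi_j(\bfx'),
\end{equation*}
so $\cQ_\cZ$ factors as $\cQ_\cZ = \Phi \bL_\cZ^{-1} \Phi^*$, where $\Phi \colon \mathbb{R}^m \to L^2(\mu)$ is the bounded linear map $\bfv \mapsto \sum_{i=1}^m v_i \phi_i$ and $\Phi^* \colon L^2(\mu) \to \mathbb{R}^m$ is its adjoint, $(\Phi^* f)_i = \int \phi_i(\bfx) f(\bfx)\, d\mu(\bfx)$. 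By symmetry of $L$, a direct computation gives
\begin{equation*}
(\Phi^* \Phi)_{ij} = \int L(\bz_i, \bfx) L(\bfx, \bz_j)\, d\mu(\bfx) = \bPsi_{ij},
\end{equation*}
so $\Phi^*\Phi = \bPsi$.

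Second, I would apply the Sylvester/Weinstein--Aronszajn identity to the finite-rank operator $\cQ_\cZ = (\Phi \bL_\cZ^{-1}) \Phi^*$. Since $\cQ_\cZ$ has rank at most $m$, it is trace class, and the identity yields
\begin{equation*}
\det(\cI + \cQ_\cZ) = \det\!\big(\cI + \Phi \bL_\cZ^{-1} \Phi^*\big) = \det\!\big(\bI_m + \bL_\cZ^{-1} \Phi^*\Phi\big) = \det\!\big(\bI_m + \bL_\cZ^{-1} \bPsi\big),
\end{equation*}
where the middle determinant is an ordinary $m \times m$ determinant.

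Third, factoring $\bL_\cZ^{-1}$ out of the determinant gives the claimed formula:
\begin{equation*}
\det\!\big(\bI_m + \bL_\cZ^{-1} \bPsi\big) = \det\!\big(\bL_\cZ^{-1}(\bL_\cZ + \bPsi)\big) = \frac{\det(\bL_\cZ + \bPsi)}{\det \bL_\cZ}.
\end{equation*}
The main subtlety is the step invoking $\det(\cI + AB) = \det(\cI + BA)$ for an operator $A \colon \mathbb{R}^m \to L^2(\mu)$ and $B \colon L^2(\mu) \to \mathbb{R}^m$ of finite rank; this is standard but is the only place where the analytic (as opposed to linear-algebraic) content of the lemma enters. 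One also implicitly uses that $\bL_\cZ$ is invertible, which is guaranteed by the strict positive definiteness of $L$ on distinct points $\bz_1, \ldots, \bz_m$.
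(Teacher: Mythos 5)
Your proof is correct, but it takes a different route from the paper's. You factor $\cQ_\cZ = \Phi \bL_\cZ^{-1}\Phi^*$ through $\mathbb{R}^m$, observe $\Phi^*\Phi = \bPsi$, and invoke the operator-level Sylvester/Weinstein--Aronszajn identity $\det(\cI+AB)=\det(\bI+BA)$ to drop down to an $m\times m$ determinant in one stroke. The paper instead argues from the definition of the Fredholm determinant for finite-rank operators: it picks an orthonormal basis $(\phi_i)$ of $S=\mathrm{Span}(L(\bz_i,\cdot))$, computes the matrix of $\cI+\cQ_\cZ$ in that basis to get $\det(\bI + \bW^T\bL_\cZ^{-1}\bW)$ with $\bW_{ij}=\langle L(\cdot,\bz_i),\phi_j\rangle$, applies the classical matrix determinant lemma, and finally identifies $\bW\bW^T=\bPsi$ via Parseval on $S$. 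In effect the paper hand-proves, in this special case, exactly the identity you cite as a black box; your version is shorter and basis-free but leans on the trace-class Sylvester identity as an external result, while the paper's is self-contained given only the finite-rank definition of $\det(\cI+\cQ_\cZ)$. One cosmetic point: with your grouping $A=\Phi\bL_\cZ^{-1}$, $B=\Phi^*$ the identity gives $\det(\bI+\bPsi\bL_\cZ^{-1})$ rather than $\det(\bI+\bL_\cZ^{-1}\bPsi)$; either regroup as $A=\Phi$, $B=\bL_\cZ^{-1}\Phi^*$ or note the two matrix determinants coincide, so nothing is lost. Your closing remarks on invertibility of $\bL_\cZ$ match the paper's observation that the $L(\bz_i,\cdot)$ are linearly independent by positive definiteness of $L$.
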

\begin{proof}
First note that $\cQ_\cZ$ has finite rank since for $f\in L^2(\mu)$,
$$ \cQ_\cZ f = \sum_{u,v=1}^M [\bL_{\cZ}^{-1}]_{uv}
L(\bz_u,\cdot) \int L(\bz_v,y)f(y) \, d\mu(y) \in S$$
with 
$$S = \text{Span}\left( L(\bz_i,\cdot);
  1\leq i\leq M\right).$$
Note also that the $L(\bz_i, \cdot)$'s are linearly independent since $L$
is a positive definite kernel. Now let $(\phi_i)_{1\leq i\leq M}$ be an
orthonormal basis of $S$, i.e. $\text{Span}(\phi_i; 1\leq i\leq M)= S$
and 
$$
\int \phi_i \phi_j d\mu = \delta_{ij},
$$
and define the matrix $\bW$ by 
$$ \bW_{ij} = \langle L(\cdot, \bz_i), \phi_j\rangle,$$
where $\langle \cdot,\cdot\rangle$ denotes the inner product of $L^2(\mu)$.
By definition of the Fredholm determinant for finite rank operators
\cite[Section VII.1 or Theorem VII.3.2]{GoGoKa90}, it comes
$$ \det (\cI+\cQ_\cZ) = \det \bpo \delta_{jk} + \langle \cQ_\cZ \phi_j,
\phi_i \rangle \bpc_{1\leq i,j \leq n}.$$
Since
$$
\langle \cQ_\cZ \phi_j, \phi_i \rangle = \sum_{m,n=1}^M \bW_{nj}\bW_{mi}[\bL_{\cZ}^{-1}]_{mn}
$$
it comes
$$ \det(\cI + \cQ_Z) = \det (\bI + \bW^T \bL_\cZ^{-1} \bW).$$
Applying the classical matrix determinant lemma, it comes
$$ \det(\cI + \cQ_\cZ) = \frac{\det (\bL_\cZ + \bW \bW^T)}{\det \bL_\cZ}.$$
We finally remark that
\begin{eqnarray*}
[\bW \bW^T]_{ij} &=& \sum_{k=1}^M \langle L(\bz_i,\cdot), \phi_k\rangle\,
\langle L(\bz_j,\cdot), \phi_k\rangle\\
&=& \left\langle \sum_{k=1}^M \langle L(\bz_i,\cdot), \phi_k\rangle
  \phi_k,\, L(\bz_j,\cdot)\right\rangle\\
&=& \langle L(\bz_i,\cdot),\, L(\bz_j,\cdot)\rangle.
\end{eqnarray*}
 \end{proof}
\fi

\iflong
\begin{proof}(of Proposition~\ref{p:boundsContinuous})
We first prove the right inequality in
\eqref{eq:boundDetContinuous}. From \cite[Section VII.7]{GoGoKa90},
using \eqref{eq:traceL}, it holds
\begin{equation}
\label{e:continuous:tool1}
\det(\cI+\cL) = 1 + \sum_{k=1}^\infty \frac{1}{k!} \int \det \big(\big(
L(\bx_i,\bx_j) \big)\big)d\mu(\bx_1)\dots d\mu(\bx_k).
\end{equation}
We now apply the same argument as in the proof of the finite case (proof of
Proposition~\ref{p:boundsFinite}). Denoting $\bL_{\cX\cY} = \det
((L(\bx_i,\by_i)))$ and $\bL_\cX = ((\det L(\bx_i,\bx_j))$, we know
from the positive definiteness of the kernel $L$ that $\bL_\cX -
\bL_{\cX\cZ}\bL_\cZ^{-1}\bL_{\cZ\cX}$ is positive semidefinite, which
yields
$$
\det \bL_\cX \geq \det \bL_{\cX\cZ}\bL_\cZ^{-1}\bL_{\cZ\cX}.
$$
Plugging this into \eqref{e:continuous:tool1} yields the right inequality in \eqref{eq:boundDetContinuous}. 

Upon noting that 
$$ \tr(\cQ_\cZ) = \sum_{mn}[\bL_\cZ^{-1}]_{uv} \int
L(\bx,\bz_u)L(\bz_v,\bx) d\mu(\bx) = \tr(\bL_\cZ^{-1}\bPsi),$$
the proof of the left inequality in \eqref{eq:boundDetContinuous}
follows the lines of the proof of Proposition~\ref{p:boundsFinite},
since the main tool \cite[Theorem 1]{SeSi75} is valid for any
trace-class operators.
\end{proof}
\fi

\section{Experiments}
\label{s:experiments}

\subsection{A toy Gaussian continuous experiment}
\label{s:gaussianExperiment}

In this section, we consider a DPP on $\mathbb{R}$, so that the bounds
derived in Section~\ref{s:continuous} apply. As in \cite[Section 5.1]{AFAT14}, we take the base
measure to be proportional to a Gaussian, i.e. its density is $ \mu'(x) = \kappa \cN(x \vert 0, (2\alpha)^{-2}).$
We consider a squared exponential kernel $ L(x,y) = \exp\left(-\eps^2\Vert x-y\Vert^2\right)$.
In this particular case, the spectral decomposition of operator $\cL$
is known \cite{FaMc12}\footnote{We follow the parametrization of
  \cite{FaMc12} for ease of reference.}: the eigenfunctions of $\cL$ are scaled Hermite
polynomials, while the eigenvalues are a geometrically decreasing
sequence. This 1D Gaussian-Gaussian example is interesting for two
reasons: first, the spectral decomposition of $\cL$ is known, so that we can sample exactly
from the corresponding DPP \cite{HKPV06} and thus generate synthetic
datasets. Second, the Fredholm determinant $\det(\cI+\cL)$ in this
special case is a q-Pochhammer symbol, and can thus be efficiently
computed\footnote{\url{http://docs.sympy.org/latest/modules/mpmath/functions/qfunctions.html\#q-pochhammer-symbol}},
which allows for comparison with ``ideal'' likelihood-based methods,
to check the validity of our MCMC sampler, for instance. We emphasize
that these special properties are not needed for the inference methods
in Section~\ref{s:learning}, they are simply useful to demonstrate
their correctness. 

We sample a synthetic dataset using $(\kappa, \alpha, \eps) =
(1000,0.5,1)$, resulting in $13$ points shown in red in
Figure~\ref{f:dataAndOptimizedInputs}. Applying the variational inference method of
Section~\ref{s:variational}, jointly optimizing in $\cZ$ and
$\theta=(\kappa,\alpha,\eps)$ using the CMA-ES optimizer \cite{Han06},
yields poorly consistent results: $\kappa$ varies over several orders
of magnitude from one run to the other, and relative errors for
$\alpha$ and $\eps$ go up to $100\%$ (not shown). We thus investigate
the identifiability of the parameters with the retrospective MH of
Section~\ref{s:MH}. To limit the range of $\kappa$, we choose for
$(\log\kappa,\log\alpha,\log\eps)$ a wide uniform prior
over  
$$[200, 2000]\times [-10,10]\times[-10,10].$$
We use a Gaussian proposal, the covariance matrix of
which is adapted on-the-fly \cite{HaSaTa01} so as to reach $25\%$ of
acceptance. We start each iteration with $m=20$ pseudo-inputs, and
increase it by $10$ and re-optimize when the acceptance decision
cannot be made. Most iterations could be made with $m=20$, and the maximum number of inducing inputs required in our
run was $80$. We show the results of a run of length $10\,000$ in
Figure~\ref{f:continuous}. Removing a burn-in sample of size $1000$, we show the resulting
marginal histograms in Figures~\ref{f:continuous:margKappa},
\ref{f:continuous:margAlpha}, and
\ref{f:continuous:margEps}. Retrospective MH and the ideal MH agree. The
prior pdf is in green. The posterior marginals of $\alpha$ and $\eps$
are centered around the values used for simulation, and are very
different from the prior, showing that the likelihood contains
information about $\alpha$ and $\eps$. However, as expected, almost nothing is
learnt about $\kappa$, as posterior and prior roughly coincide. This is an example of the issues that come with
parametrizing $L$ directly, as mentioned in Section~\ref{s:intro}.

To conclude, we show a set of optimized pseudo-inputs $\cZ$ in black
in Figure~\ref{f:dataAndOptimizedInputs}. We also superimpose the
marginal of any single point in the realization, which is available
through the spectral decomposition of $\cL$ here \cite{HKPV06}. In this particular
case, this marginal is a Gaussian. Interestingly, the pseudo-inputs
look like evenly spread samples from this marginal. Intuitively, they
are likely to make the denominator in the likelihood
\eqref{eq:continuousDensity} small, as they represent an ideal sample
of the Gaussian-Gaussian DPP.

\begin{figure}
\centering
\subfigure[]{
\includegraphics[width=\twofig]{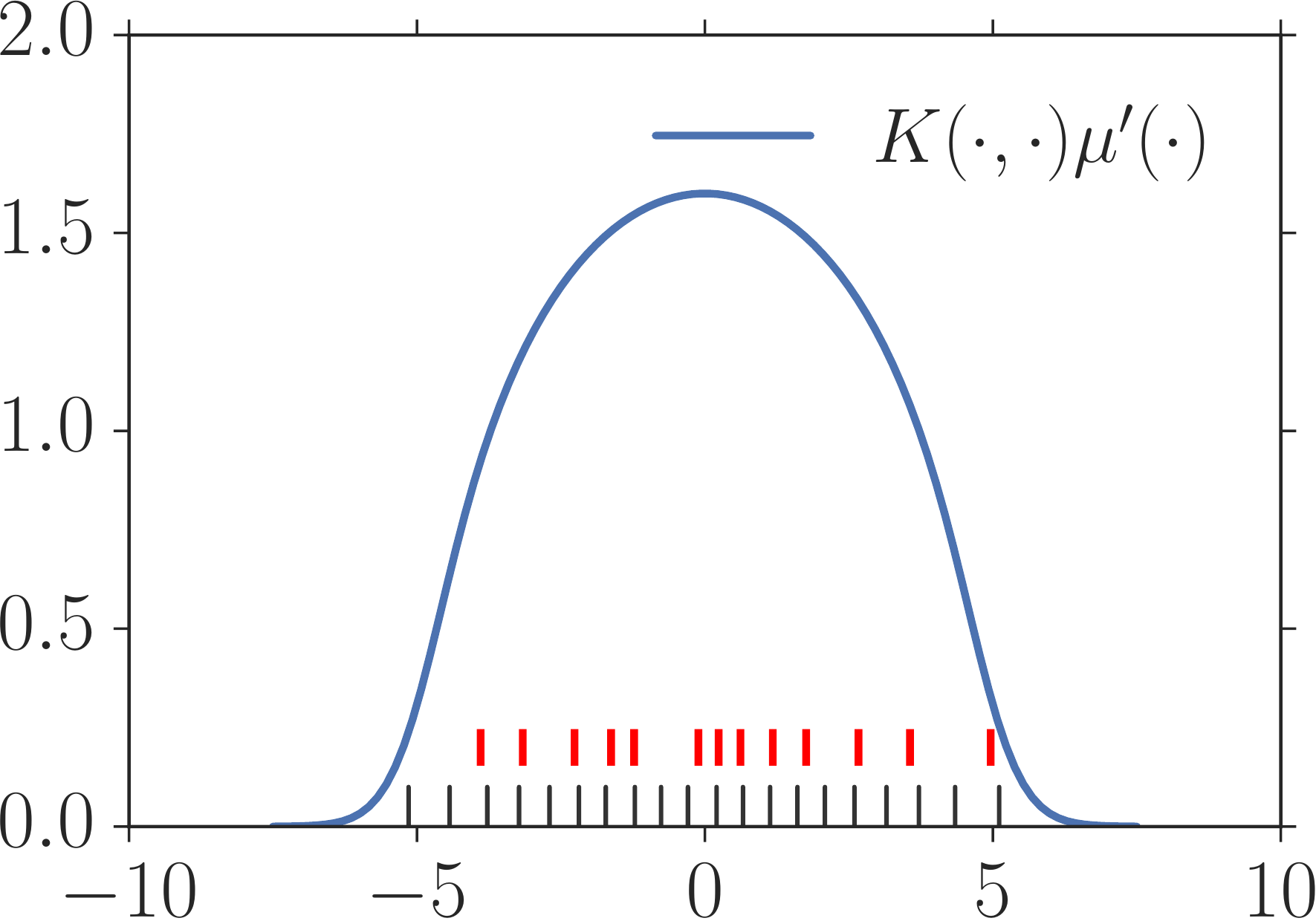}
\label{f:dataAndOptimizedInputs}
}
\subfigure{
\includegraphics[width=\twofig]{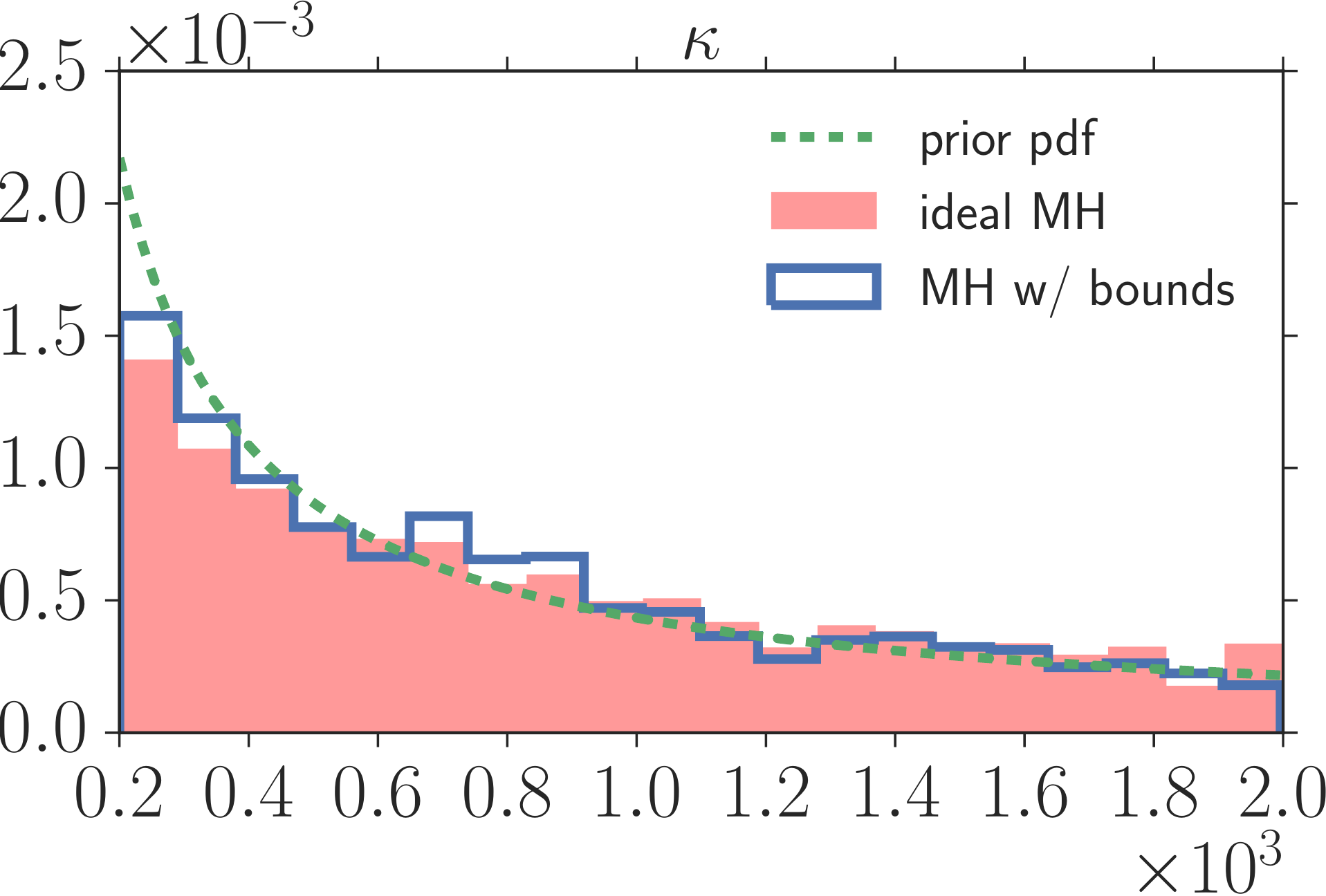}
\label{f:continuous:margKappa}
}\\
\subfigure{
\includegraphics[width=\twofig]{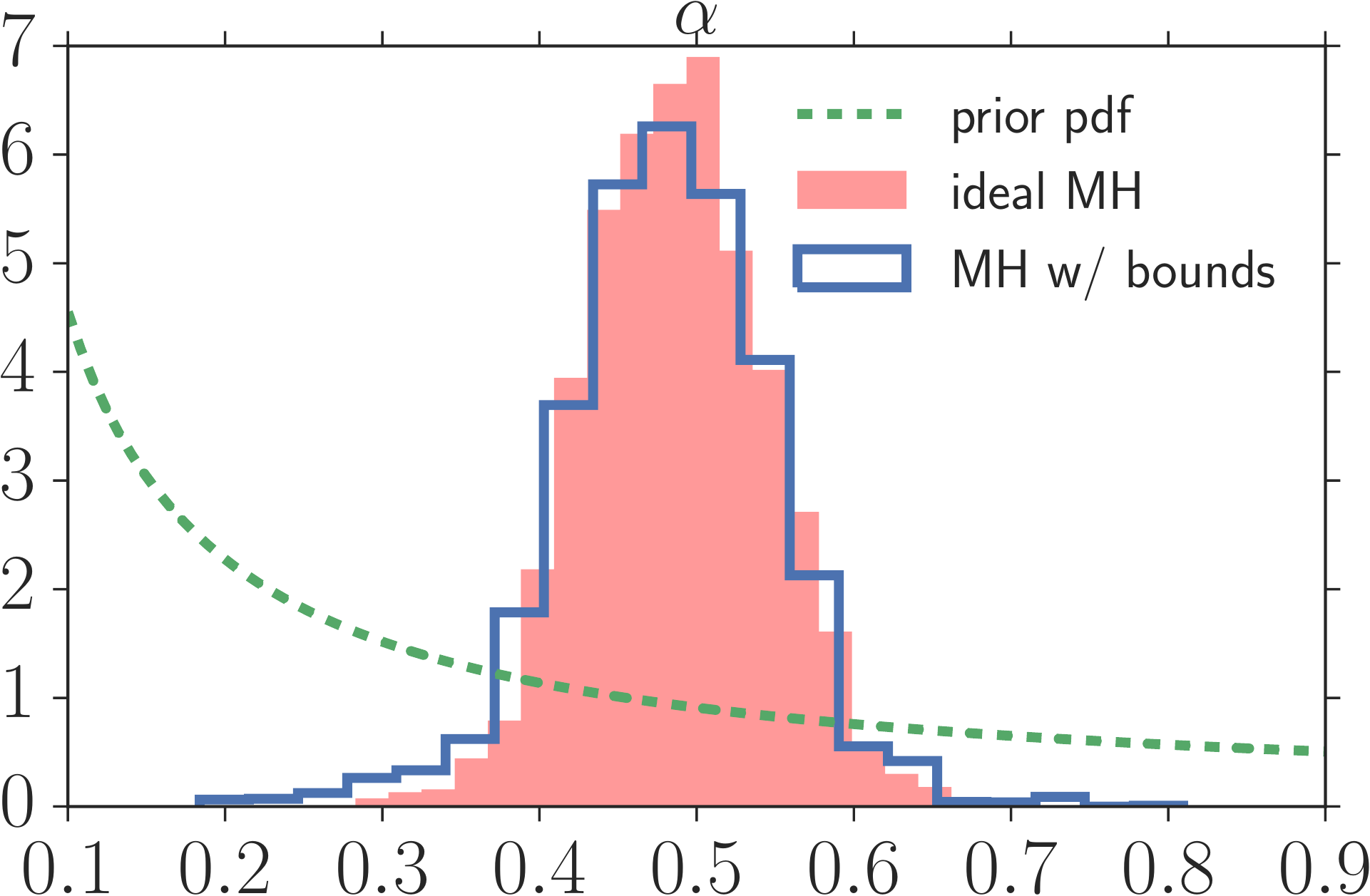}
\label{f:continuous:margAlpha}
}
\subfigure{
\includegraphics[width=\twofig]{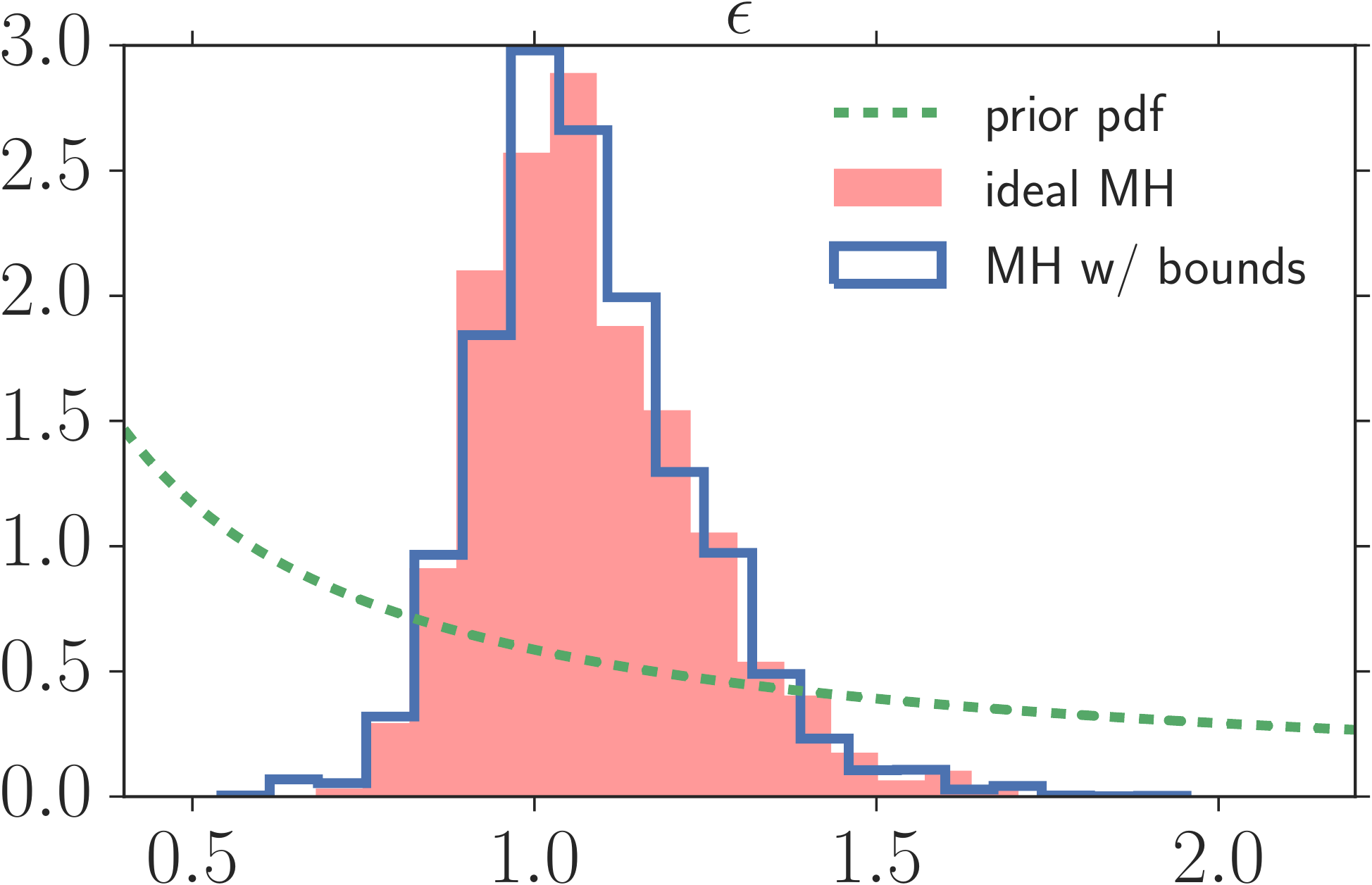}
\label{f:continuous:margEps}
}
\label{f:continuous}
\caption{Results of running adaptive Metropolis-Hastings in the 1D
  Gaussian continuous experiment of
  Section~\ref{s:gaussianExperiment}. Figure~\ref{f:dataAndOptimizedInputs}
  shows data in red, a set of optimized pseudo-inputs in black for
  $\theta$ set to the value used in the generation of the synthetic
  dataset, and the marginal of one point in the realization in
  blue. Figures~\ref{f:continuous:margKappa},
  \ref{f:continuous:margAlpha}, and
  \ref{f:continuous:margEps} show marginal histograms of
  $\kappa,\alpha,\eps$.}
\end{figure}

\subsection{Diabetic neuropathy dataset}
\label{s:nerveFiber}

Here, we consider a real dataset of spatial patterns of nerve fibers in
diabetic patients. These nerve fibers become more clustered as diabetes progresses \cite{waller01}.
The dataset consists of seven samples collected from diabetic patients at different stages of diabetic neuropathy
and one healthy subject. We follow the experimental setup used in \cite{AFAT14} 
and we split the total samples into two classes: Normal/Mildly
Diabetic and Moderately/Severely Diabetic. The first class contains three samples and the second one the remaining four. 
Figure \ref{fig:nervefiber} displays the point process data, which contain on average $90$ points per sample in the 
Normal/Mildly class and $67$ for the Moderately/Severely class. We wish to investigate the differences between these classes
by fitting a separate DPP to each class and then quantify the differences of the repulsion or overdispersion of the point process data
through the inferred kernel parameters. Paraphrasing \cite{AFAT14}, we consider a continuous DPP
on $\mathbb{R}^2$, with kernel function 
\begin{equation}
L(\bfx_i, \bfx_j) = \exp\left(- \sum_{d=1}^2 \frac{\left(x_{i,d}  - x_{j,d}\right)^2}{2 \sigma^{2}_d} \right), 
\end{equation}
and base measure proportional to a Gaussian
$\mu'(\bfx) = \kappa\prod_{d=1}^2\cN(x_d\vert\mu_d,\rho_d^2).$
As in \cite{AFAT14}, we quantify the overdispersion of realizations of
such a Gaussian-Gaussian DPP through the  quantities $\gamma_d =
\sigma_d/\rho_d$, which are invariant to the scaling of $\bfx$. Note
however that, strictly speaking, $\kappa$ also mildly influences repulsion.

We investigate the ability of the variational method in
Section~\ref{s:variational} to perform approximate maximum likelihood training over the 
kernel parameters $\theta=(\kappa, \sigma_1,\sigma_2, \rho_1,\rho_2)$. Specifically, we   
wish to fit a separate continuous DPP to each class by jointly
maximizing the variational lower bound over $\theta$ and the inducing
inputs $\cZ$ using gradient-based optimization. Given that the number
of inducing variables determines the amount of the approximation, or
{\em compression} of the DPP model, we examine different settings for
this number and see whether the corresponding trained models 
provide similar estimates for the overdispersion measures. Thus, we train the DPPs under different approximations having 
$m \in \{50,100,200,400,800,1200\}$ inducing variables and display the
estimated overdispersion measures in
Figures~\ref{f:nerveFiber:a} and \ref{f:nerveFiber:b}. These estimated
measures converge to coherent values as $m$ increases. They show a
clear separation between the two classes, as also found in
\cite{AFAT14,waller01}. Furthermore,
Figures~\ref{f:nerveFiber:c} and \ref{f:nerveFiber:d} show the values of
the upper and lower bounds on the log likelihood, which as expected,
converge to the same limit as $m$ increases. We point out that the overall optimization of the 
variational lower is relatively fast in our MATLAB implementation. For instance, it takes 
24 minutes for the most expensive run where $m=1200$ to perform
$1\,000$ iterations until convergence. Smaller values of $m$ yield
significantly smaller times. 

Finally, as in Section~\ref{s:gaussianExperiment}, we comment on the
optimized pseudo-inputs. Figure \ref{fig:nerverFiberInducing} displays
the inducing points at the end of a converged run of variational
inference for various values of $m$. Similarly to
Figure~\ref{f:dataAndOptimizedInputs}, these pseudo-inputs are placed in remarkably
neat grids and depart significantly from their initial locations.       

\begin{figure*}[!htb]
\vskip 0.2in
\begin{center}
\begin{tabular}{c}
{\includegraphics[width=\textwidth]  
{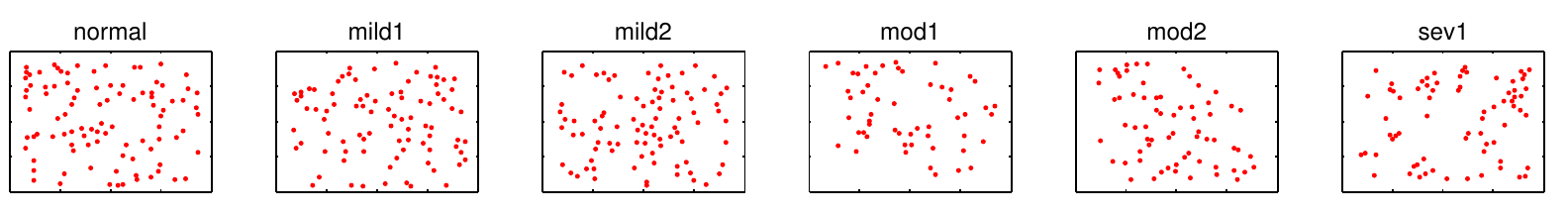}}
\end{tabular}
\caption{Six out of the seven nerve fiber samples. The first three 
samples (from left to right) correspond to a Normal Subject and two Mildly Diabetic Subjects, respectively. 
The remaining three samples correspond to a Moderately Diabetic Subject and two Severely
Diabetic Subjects.} 
\label{fig:nervefiber}
\end{center}
\vskip -0.2in
\end{figure*}

\begin{figure*}[!htb]
\vskip 0.2in
\begin{center}
\subfigure[]{
\includegraphics[width=0.22\textwidth]{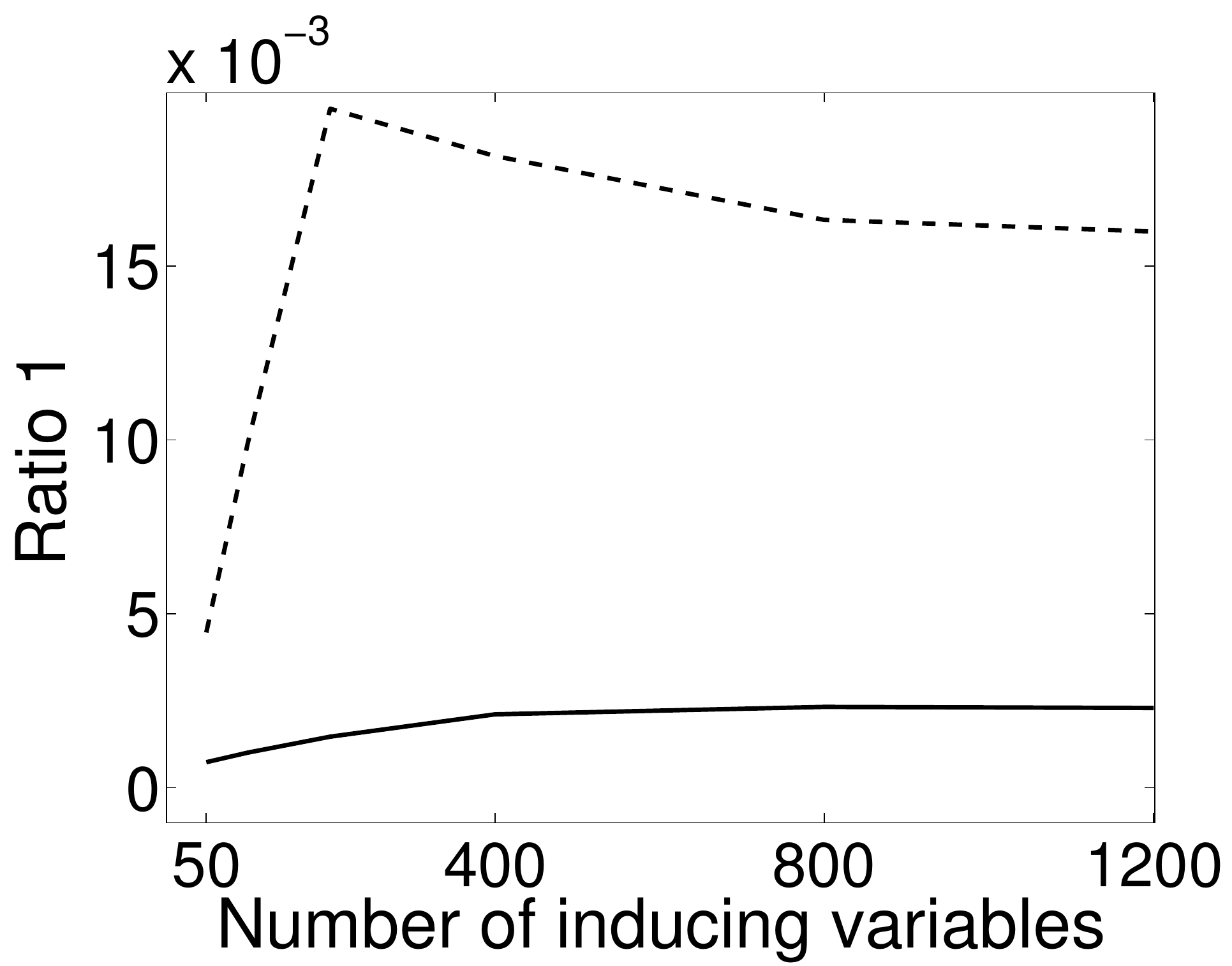}
\label{f:nerveFiber:a}
}
\subfigure[]{
\includegraphics[width=0.22\textwidth]{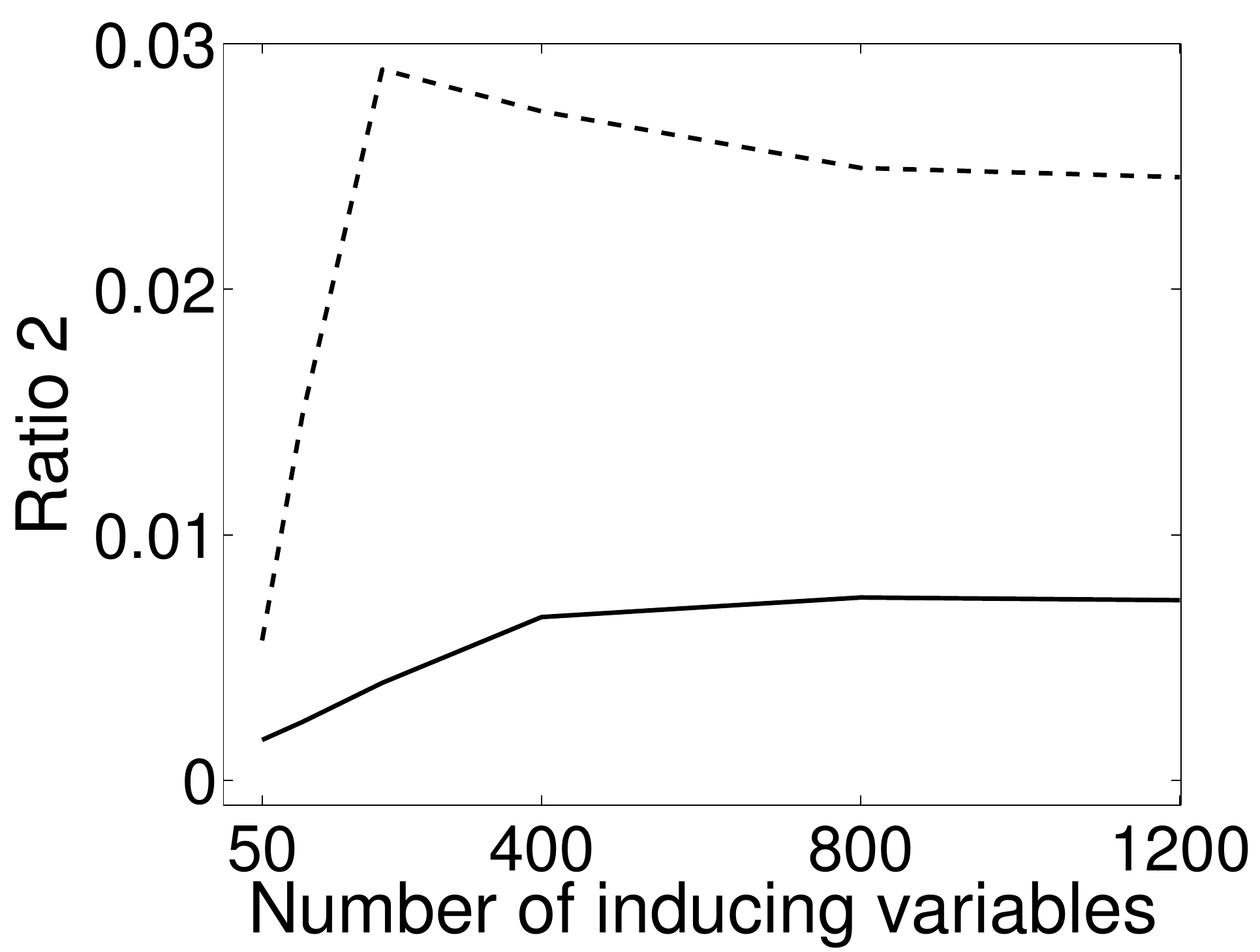}
\label{f:nerveFiber:b}
}
\subfigure[]{
\includegraphics[width=0.22\textwidth]{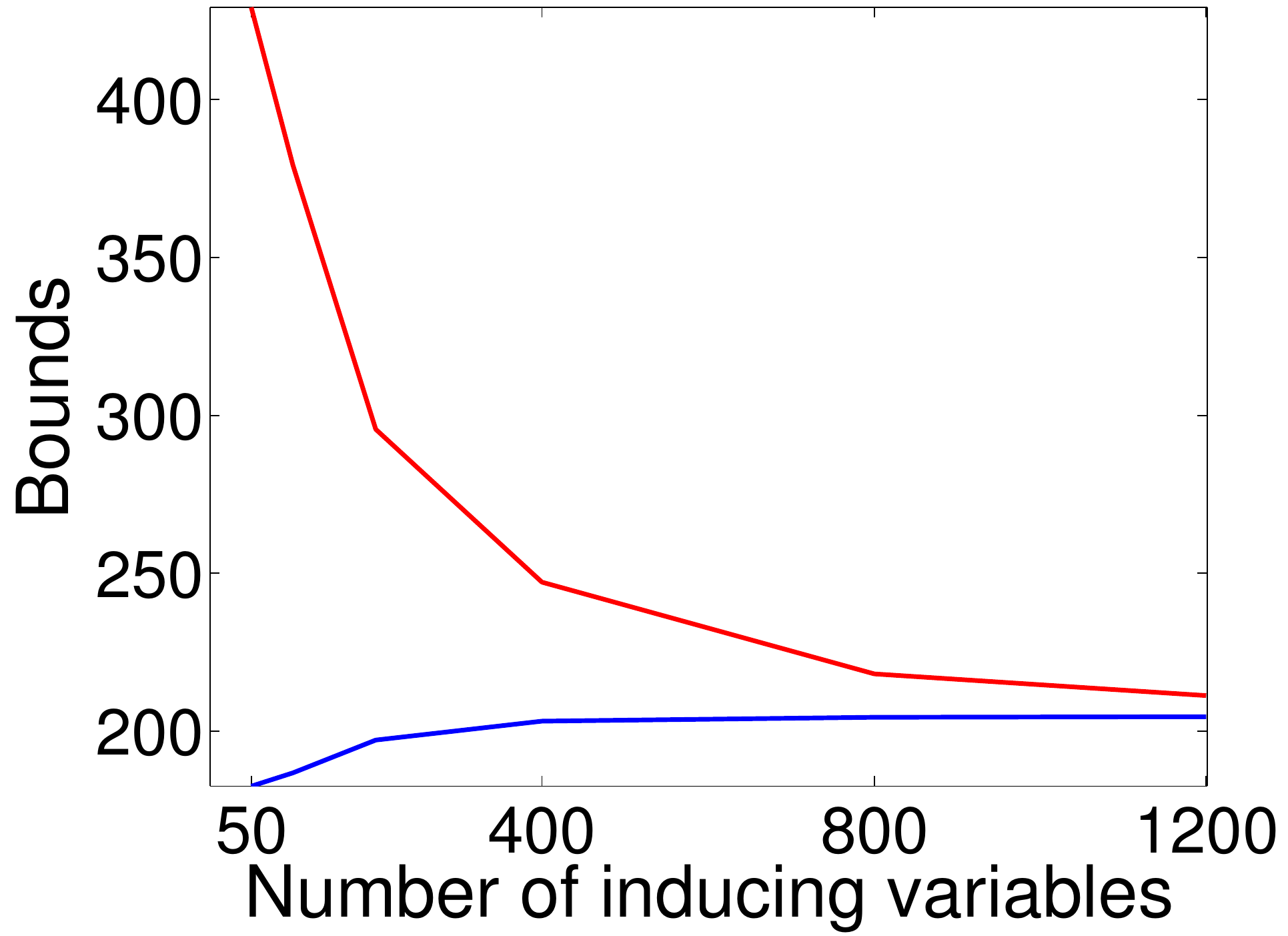}
\label{f:nerveFiber:c}
}
\subfigure[]{
\includegraphics[width=0.22\textwidth]{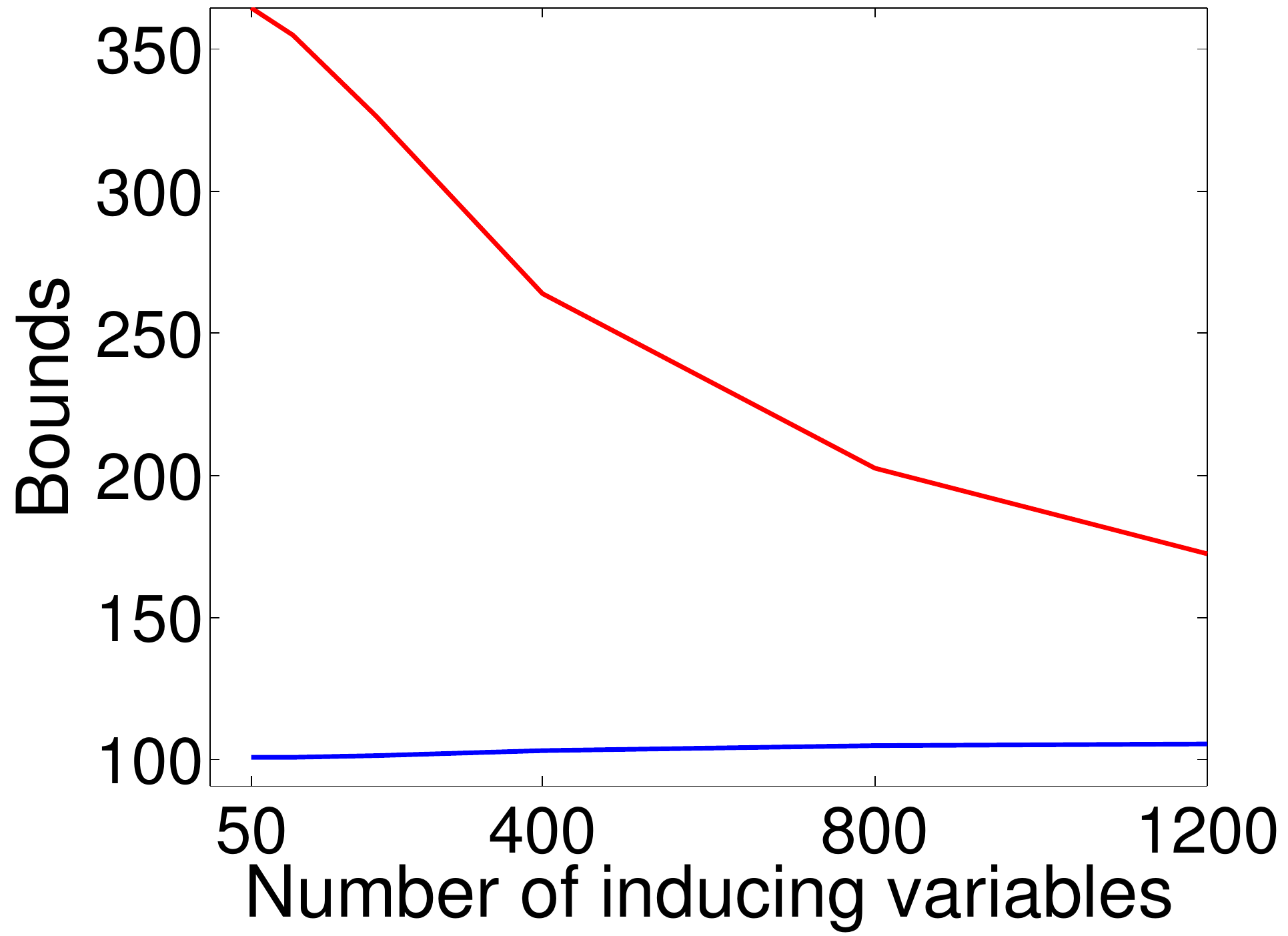}
\label{f:nerveFiber:d}
}
\caption{Figures~\ref{f:nerveFiber:a} and \ref{f:nerveFiber:b} show the evolution of the estimated overdispersion measures $\gamma_1$ and $\gamma_2$ as functions 
of the number of inducing variables used. The dotted black lines correspond to the Normal/Mildly
Diabetic class while the solid lines to the Moderately/Severely Diabetic class. Figure~\ref{f:nerveFiber:c} shows the upper bound (red) 
and the lower bound (blue) on the log likelihood as functions of the number of inducing variables for the Normal/Mildly
Diabetic class while the Moderately/Severely Diabetic case is shown in Figure~\ref{f:nerveFiber:d}.} 
\label{fig:nervefiberParametersBounds}
\end{center}
\vskip -0.2in
\end{figure*}

\begin{figure*}[!htb]
\vskip 0.2in
\begin{center}
 {\includegraphics[width=\textwidth]  
{\figdir/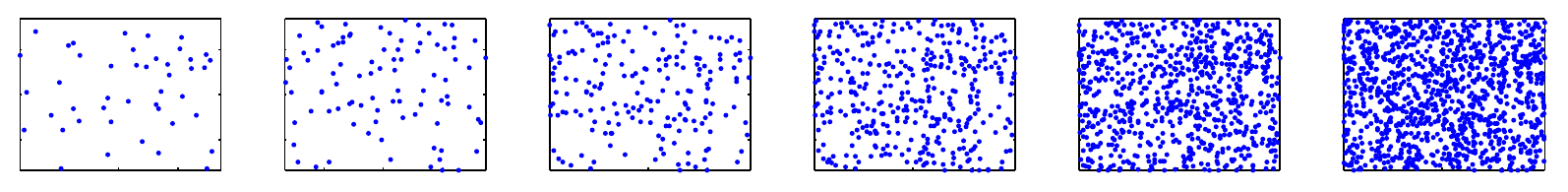}} 
{\includegraphics[width=\textwidth]  
{\figdir/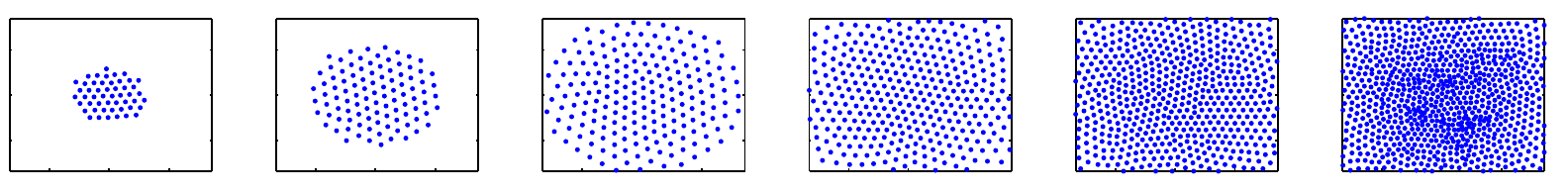}}
\caption{We illustrate the optimization over the inducing inputs
  $\cZ$ for several values of  $m \in \{50,100,200,400,800,1200 \}$ in
  the DPP of Section~\ref{s:nerveFiber}. We consider the Normal/Mildly
  diabetic class. The panels in the top row show the initial inducing
  input locations for various values of $m$, while the corresponding panels in the bottom row 
show the optimized locations.} 
\label{fig:nerverFiberInducing}
\end{center}
\vskip -0.2in
\end{figure*}

\section{Discussion}
\label{s:discussion}

We have proposed novel, cheap-to-evaluate, nonspectral bounds on the
determinants arising in the likelihoods of DPPs, both finite and continuous. We have shown how to use these bounds to infer the
parameters of a DPP, and demonstrated their use for
expensive-but-exact MCMC and cheap-but-approximate variational
inference. In particular, these bounds have some degree of freedom
--~the pseudo-inputs~--, which we optimize so as to tighten the bounds. This optimization step is
crucial for likelihood-based inference of parametric DPP models, where bounds have to adapt to the point where the likelihood
is evaluated to yield decisions which are consistent with the ideal
underlying algorithms. In future work, we plan to investigate connections of our bounds with
the quadrature-based bounds for Fredholm determinants of \cite{Bor10}. We also plan to consider
variants of DPPs that condition on the number of points in the
realization, to put joint priors over the within-class distributions
of the features in classification problems, in a manner related to
\cite{ZoAd12}. In the long term, we will investigate connections between
kernels $L$ and $K$ that could be made without spectral knowledge, to address
the issue of replacing $L$ by $K$.

\subsection*{Acknowledgments}
The authors would like to thank Adrien Hardy for useful
  discussions and Emily Fox for kindly providing access to the
  diabetic neuropathy dataset.

\iflong
\appendix 

\section{On the connection to variational sparse GPs}

Here, we provide further details about how the variational lower bound
for DPPs over finite sets in Proposition~\ref{p:boundsFinite} can be obtained by the variational approach to sparse GPs \cite{titsias}. As mentioned 
in Section~\ref{s:svi}, it holds that  
$$
 \frac{1}
 {\text{det}(\bL +\bI)}
 = \left( \int  \mathcal{N}(\bff| \bL_{\cY\cZ}[\bL_\cZ]^{-1}\bfu, \bL - \bfQ) \mathcal{N}(\bfu|{\bf 0}, \bL_\cZ)  e^{ - \frac{1}{2} \bff^T \bff }  d \bff d \bfu \right)^2. 
$$
Taking logarithms yields 
$$
\log  \frac{1}
 {\text{det}(\bL +\bI)}
 = 2  \log  \int  \mathcal{N}(\bff| \bL_{\cY\cZ}[\bL_\cZ]^{-1}\bfu, \bL - \bfQ) \mathcal{N}(\bfu|{\bf 0}, \bL_\cZ)  e^{ - \frac{1}{2} \bff^T \bff }  d \bff d \bfu.
$$
A lower bound to the likelihood \eqref{e:discreteDensity} can thus be
obtained if we bound
$$ 
\mathcal{F} = \log \int \mathcal{N}(\bff| \bL_{\cY\cZ}[\bL_\cZ]^{-1}\bfu, \bL - \bfQ) \mathcal{N}(\bfu|{\bf 0}, \bL_\cZ)  e^{ - \frac{1}{2} \bff^T \bff }  d \bff d \bfu.
$$
This has a similar functional form with the marginal likelihood in a standard GP regression model: 
$e^{ - \frac{1}{2} \bff^T \bff }$ plays the role of an unnormalized Gaussian likelihood where the observation 
vector is equal to zero and the noise variance is equal to one. 
To lower bound the above we can consider the variational distribution  $q(\bff,\bfu) = \mathcal{N}(\bff| \bL_{\cY\cZ}[\bL_\cZ]^{-1}\bfu, \bL - \bfQ) q(\bfu)$  
and apply Jensen's inequality so that 
$$ 
\mathcal{F} \geq \int \mathcal{N}(\bff| \bL_{\cY\cZ}[\bL_\cZ]^{-1}\bfu, \bL - \bfQ) q(\bfu) \log 
\frac{ \mathcal{N}(\bfu|{\bf 0}, \bL_\cZ)  e^{ - \frac{1}{2} \bff^T \bff } }
{q(\bfu)} d \bff d \bfu,
$$
where the term $\mathcal{N}(\bff| \bL_{\cY\cZ}[\bL_\cZ]^{-1}\bfu, \bL
- \bfQ)$ cancels out inside the logarithm. This can be written
as
$$ 
\mathcal{F} \geq \int q(\bfu)
\left\{ - \frac{1}{2} \int \mathcal{N}(\bff| \bL_{\cY\cZ}[\bL_\cZ]^{-1}\bfu, \bL - \bfQ) \bff^T \bff d \bff  
 +  \log \frac{ \mathcal{N}(\bfu|{\bf 0}, \bL_\cZ) }
{q(\bfu)}  \right \} d \bfu.
$$
Further, given that   
$$
\int \mathcal{N}(\bff| \bL_{\cY\cZ}[\bL_\cZ]^{-1}\bfu, \bL - \bfQ) \bff^T \bff d \bff = \boldsymbol{\alpha}^T \boldsymbol{\alpha} 
+  \text{tr}(\bL - \bfQ),$$
 where $\boldsymbol{\alpha} = \bL_{\cY\cZ}[\bL_\cZ]^{-1}\bfu$, the bound can be written as 
$$ 
\mathcal{F} \geq \int q(\bfu) \log 
\frac{ \mathcal{N}(\bfu|{\bf 0}, \bL_\cZ)  e^{ - \frac{1}{2} \boldsymbol{\alpha}^T \boldsymbol{\alpha} } }
{q(\bfu)} d \bfu   - \frac{1}{2} \text{tr}(\bL - \bfQ).
$$
Now if we analytically maximize w.r.t. $q(\bfu)$, under the constraint
that $q(\bfu)$ is a distribution, we obtain 
$$q(\bfu) = \frac{\mathcal{N}(\bfu|{\bf 0}, \bL_\cZ)  e^{ - \frac{1}{2} \boldsymbol{\alpha}^T \boldsymbol{\alpha} }}{
\int \mathcal{N}(\bfu|{\bf 0}, \bL_\cZ)  e^{ - \frac{1}{2} \boldsymbol{\alpha}^T \boldsymbol{\alpha} }  d \bfu}.$$
Plugging this optimal $q$ back into the bound, we obtain 
$$
\mathcal{F} \geq \log \int \mathcal{N}(\bfu|{\bf 0}, \bL_\cZ)  e^{ - \frac{1}{2} \boldsymbol{\alpha}^T \boldsymbol{\alpha} }  d \bfu
- \frac{1}{2} \text{tr}(\bL - \bfQ).
$$ 
After computing the Gaussian integral w.r.t.\ $\bfu$, the r.h.s. reduces to the logarithm of the DPP bound for 
the finite case, see Proposition~\ref{p:boundsFinite}.

\section{ $\Psi$ matrix for Gaussian kernels} 

We give here more details on the Gaussian kernel with Gaussian base
measure used in the experimental Section~\ref{s:experiments}. We use
the notation of Section~\ref{s:nerveFiber}. The kernel is
$$
L(\bfx_i, \bfx_j) = e^{ - \sum_{d=1}^D \frac{ \left(x_{i,d}  - x_{j,d} \right)^2}{2 \sigma_d^2}  }, 
$$
with Gaussian base measure having density 
$$
\mu'(\bfx) = \kappa  \prod_{d=1}^D \frac{1}{\sqrt{2 \pi \rho_d^2}} e^{ - \frac{1}{2 \rho^2_d} \left(x_d - \mu_d\right)^2}.
$$ 
In this Gaussian-Gaussian case, the $\Psi$ matrix defined in Proposition~\ref{p:boundsContinuous} can be analytically computed: the $ij$-th element is given by 
\begin{align}
[\Psi]_{i j} & = \int_{\mathbbm{R}^D} L(\bfz_i, \bfx) L(\bfx, \bfz_j) d \mu(\bx) = \kappa \prod_{d=1}^D 
\frac{e^{- \frac{1}{4} \sigma_d^{-2} ( z_{i,d} - z_{j,d})^2 - \frac{\sigma^{-2}_d (\mu_d - \bar{z}_d)^2}{2 \sigma_d^{-2} \rho_d^2 + 1 } }}
{\left(2 \sigma_d^{-2} \rho^2_d + 1 \right)^{\frac{1}{2}}}, 
\end{align}
where $\bar{z}_d  = \frac{ z_{i,d} + z_{j,d}}{2}$.
\fi
\bibliographystyle{unsrt}
\bibliography{stats,learning,stats2}

\end{document}